\newtheorem{theorem}{Theorem}
\newtheorem{lemma}[theorem]{Lemma}
\newtheorem{proposition}[theorem]{Proposition}
\newtheorem{remark}[theorem]{Remark}
\newtheorem{definition}{Definition}
\newenvironment{mymatrix}{\begin{pmatrix}} {\end{pmatrix} }
\newcommand{\ve}[1]{\ensuremath{\boldsymbol{#1}}}
\newcommand{\NN}{\mathbb{N}}      
\newcommand{\F}{\ensuremath{\mathbb{F}}}
\newcommand{\Gsq}{\G_{\text{sq}}}
\newcommand{\Grrs}{\G^{\prime}}
\newcommand{\Ghattrs}{\hat{\G}_{\text{TRS}}}
\newcommand{\FqZ}{\mathbb{F}_{q_0}}
\newcommand{\FqO}{\mathbb{F}_{q_1}}
\newcommand{\Fqi}{\mathbb{F}_{q_i}}
\newcommand{\Fqii}{\mathbb{F}_{q_{i-1}}}
\newcommand{\Fqell}{\mathbb{F}_{q_{\ell}}}
\newcommand{\Fq}{\ensuremath{\mathbb{F}_q}}
\newcommand{\SubfieldSubcode}{{\tt SubfieldSubcode}}
\newcommand{\Square}{{\tt Square}}
\newcommand{\Interpolate}{{\tt Interpolate}}
\newcommand{\SidelnikovShestakov}{{\tt SidelShest}}
\newcommand{\rs}{{\tt GenSub}}
\newcommand{\rref}{{\tt rref}}
\newcommand{\opGTRS}{{\tt GTRS}}
\DeclareMathOperator{\Span}{Span}
\DeclareMathOperator{\GL}{GL}
\newcommand{\wt}{w_{\text{H}}}
\newcommand{\cipher}{\ve{y}}
\newcommand{\plain}{\ve{m}}
\newcommand{\plainT}{\tilde{\ve{m}}}
\newcommand{\FF}{\mathbb{F}}
\newcommand\mydef{\coloneqq}
\newcommand{\bfA}{\ve{A}}
\newcommand{\bfG}{\ve{G}}
\newcommand{\bfc}{\ve{c}}
\newcommand{\bfe}{\ve{e}}
\newcommand{\calC}{\mathcal{C}}
\newcommand{\calP}{\mathcal{P}}
\newcommand{\RS}{\mathrm{RS}}
\newcommand{\TRS}{\mathrm{TRS}}
\newcommand{\correct}[1]{#1}
\newcommand{\tVec}{\ve{t}}
\newcommand{\hVec}{\ve{h}}
\newcommand{\etaVec}{\ve{\eta}}
\newcommand{\alphaVec}{\ve{\alpha}}
\newcommand{\Gtrs}{\G_{\alphaVec,\tVec,\hVec,\etaVec}}
\newcommand{\GtrsH}{\G_{\alphaVecH,\tVec,\hVec,\etaVecH}}
\newcommand{\numTwists}{\ell}
\newcommand{\etaVecH}{\hat{\ve{\eta}}}
\newcommand{\alphaVecH}{\hat{\ve{\alpha}}}
\newcommand{\alphaVecP}{\ve{\alpha}^{\prime}}
\newcommand{\Ifree}{\mathcal{I}}
\DeclareMathOperator{\evOp}{ev}
\newcommand{\ev}[2]{\evOp_{#2}(#1)}
\newcommand{\0}{\ve{0}} 
\renewcommand{\S}{\ve{S}}
\renewcommand{\a}{\ve{a}}
\renewcommand{\b}{\ve{b}}
\newcommand{\e}{\ve{e}}
\renewcommand{\c}{\ve{c}}
\renewcommand{\e}{\ve{e}}
\newcommand{\G}{\ve{G}}
\renewcommand{\P}{\ve{P}}
\newcommand{\Gpub}{\ve{G}_\mathrm{pub}}
\newcommand{\Cpub}{\mathcal{C}_\mathrm{pub}}
\newcommand{\Csub}{\mathcal{C}_\mathrm{sub}}
\newcommand{\Gsub}{\ve{G}_\mathrm{sub}}
\newcommand{\A}{\ve{A}}
\newcommand{\B}{\ve{B}}
\newcommand{\D}{\ve{D}}
\newcommand{\Fp}{\mathbb{F}_{p}}
\newcommand{\GabCode}{\mathcal{G}}
\newcommand{\Gab}[2]{\GabCode_{#2}[#1]}
\newcommand{\TGab}[3]{\GabCode_{#2}^{#3}[#1]}
\newcommand{\GmultS}{\TGab{\alphaVec,\tVec,\hVec,\etaVec}{n-t,k}{}}
\newcommand{\GmultHatS}{\TGab{\alphaVecH,\tVec,\hVec,\etaVecH}{n-t,k}{}}
\newcommand{\Fqsi}[1]{\mathbb{F}_{q_{#1}}}
\newcommand{\X}{\ve{X}}
\newcommand{\Phat}{\hat{\P}}
\newcommand{\Shat}{\hat{\S}}
\newcommand{\Ghat}{\hat{\G}}
\begin{document}

\title{Cryptanalysis of a System Based on Twisted Reed--Solomon Codes}

\author{Julien Lavauzelle\thanks{Université de Rennes, CNRS, IRMAR -- UMR 6625, France. Email: {\tt julien.lavauzelle@univ-rennes1.fr}
  }
  \and
  Julian Renner\thanks{Institute for Communications Engineering, Technical University of Munich (TUM), Germany. Email: {\tt julian.renner@tum.de}
  }
}

\date{\today}

\maketitle              


\begin{abstract}
  Twisted Reed--Solomon (TRS) codes are a family of codes that contains a large number of maximum distance separable codes that are non-equivalent to Reed--Solomon codes. 
  TRS codes were recently proposed as an alternative to Goppa codes for the McEliece code-based cryptosystem, resulting in a potential reduction of key sizes. \correct{The use of TRS codes in the McEliece cryptosystem has been motivated by the fact that a large subfamily of TRS codes is resilient to a direct use of known algebraic key-recovery methods.}
  
  In this paper, an efficient key-recovery attack on the TRS variant that was used in the McEliece cryptosystem is presented. The algorithm \correct{exploits a new approach based on recovering the structure of a well-chosen subfield subcode of the public code.}   It is proved that the attack always succeeds and  breaks the system for all practical parameters in $O(n^4)$ field operations. A software implementation of the algorithm 
  retrieves a valid private key from the public key within a few minutes, for parameters claiming a security level of $128$ bits. \correct{The success of the attack also indicates that, contrary to common beliefs, subfield subcodes of the public code need to be precisely analyzed when proposing a McEliece-type code-based cryptosystem.}
Finally, the paper discusses an attempt to repair the scheme and a modification of the attack aiming at Gabidulin--Paramonov--Tretjakov cryptosystems based on twisted Gabidulin codes.

\end{abstract}

\section{Introduction}\label{sec:intro}

In the last years, cryptosystems relying on the hardness of decoding in a generic code have gained a lot of attention due to their potential resistance against quantum computer attacks. The first code-based cryptosystem was proposed by McEliece already in 1978~\cite{mceliece1978public}.  \correct{Its hardness is based on the assumption that a random generator matrix of a random binary Goppa code is hard to distinguish from the generator matrix of a random code.} To this day, the principle behind the McEliece system still plays a significant role in the design of code-based cryptography. In particular, four out of the six code-based proposals in round 2 of the National Institute of Standards and Technology (NIST) post-quantum cryptography standardization process are based on McEliece's principle.

\correct{Compared to other post-quantum-secure public-key encryption schemes, \emph{e.g.}\ some lattice-based cryptosystems,} the main drawback of the McEliece cryptosystem lies in the size of its public key. To overcome this drawback, other families of codes have been proposed to replace Goppa codes, but most of them can be subjected to algebraic attacks. For instance, generalized Reed--Solomon (GRS) codes were proposed in 1986 by Niederreiter~\cite{Niederreieter1986}, but Sidelnikov and Shestakov mounted a very efficient attack to recover an alternative secret key~\cite{sidelnikov1992attack}. Wieschebrink proved that also random subcodes of GRS codes --- proposed in~\cite{Berger2005} --- cannot be used due to their vulnerability to the \emph{code squaring} attack~\cite{Wieschebrink2010}. Further instances and cryptanalyses of algebraic code-based schemes can be found in~\cite{Sidelnikov94, MinderS07, BergerCGO09,FaugereOPPT16, JanwaM96, CouvreurCP17}. \correct{One should emphasize that many recent attacks are largely based on previously known methods. For example, some instances of the RLCE scheme~\cite{Wang16} were broken by Couvreur, Lequesne and Tillich by a sophisticated analysis of the squares of \emph{puncturings and shortenings} of the public code~\cite{CouvreurLT19}.}

One recent alternative class of codes for the McEliece cryptosystem emerged from twisted Reed--Solomon (TRS) codes~\cite{beelen2017twisted}. In particular, Beelen \emph{et al.} analyzed the structural properties of a very specific subfamily of TRS codes~\cite{beelen2018structural}. They proved that this subfamily is disjoint from the class of GRS codes; thus the attack by Sidelnikov and Shestakov~\cite{sidelnikov1992attack} cannot be applied to their system. Further, they showed that shortenings of these codes up to two positions have maximal Schur square dimension~\cite{Puchinger_Diss}, meaning that the proposed system is impervious to a direct application of the attack presented by Couvreur \emph{et al.} in~\cite{Couvreur2014}. Additionally, the authors gave evidence that their system is not vulnerable to {straight-forward} applications of methods introduced by Wieschebrink in~\cite{wieschebrink2006attack,Wieschebrink2010}.

The intention of the authors of~\cite{beelen2018structural} was to exploit the optimal error-correction capability of TRS codes to reduce the length of the public code, and accordingly the size of the public key. In~\cite{beelen2018structural}, an explicit subfamily of TRS codes was proposed, providing a reduction of the public key up to a factor of $7.4$ compared to binary Goppa codes, for a claimed security level of $128$ bits.

In this paper, we present an \correct{efficient key-recovery attack on} this cryptosystem based on TRS codes. \correct{As analyzed by the authors of~\cite{beelen2018structural}, the direct application of previously known structural attacks does not work. Instead, we recover the structure of a well-chosen \emph{subfield subcode} $\mathcal{S}$ of the public TRS code $\mathcal{T}$. We give a characterization of the structure of this subfield subcode, as a subspace of low codimension contained in a classical Reed--Solomon code $\mathcal{R}$. We then prove that the Wieschebrink squaring method \emph{always} succeeds when applied to the subfield subcode $\mathcal{S}$, and this enables us to retrieve an algebraic description of $\mathcal{R}$. By analyzing equivalent representations of TRS codes, we finally deduce an algebraic description of the public code $\mathcal{T}$. The application of the squaring method to the subfield subcode is a non-trivial modification of Wieschebrink's attack.}

\correct{To the best of our knowledge, our attack is the first of its kind to exploit structural weaknesses of \emph{subfield subcodes} of the public code. On the contrary, the restriction to a subfield is usually considered as an operation that breaks the structure of an algebraic code and therefore makes it suitable for cryptography as attested by the attack-resilience of Goppa codes despite being subfield subcodes of Reed--Solomon codes. Our approach of attacking the subfield subcode instead of the original code might also be applicable to other classes of codes used in code-based cryptography.}

We show that for all practical parameters proposed by the designers, our algorithm recovers a valid private key from the public key in $O(n^4)$ operations over the underlying field, where $n$ denotes the code length. The attack is implemented in the computer-algebra system SageMath~\cite{sagemath} and is made public. Although the implementation is not optimized, it determines a valid private key in approximately two minutes for the parameters proposed in~\cite{beelen2018structural}.

The paper is structured as follows. In Section~\ref{sec:preliminaries}, we introduce the notation, and state the definition as well as important structural properties of TRS codes. In Section~\ref{sec:trssystem}, we present the key generation, encryption and decryption algorithm and the parameters proposed in~\cite{beelen2018structural}. In Section~\ref{sec:attack}, we derive a structural attack on the scheme and we precisely analyze its complexity. Additionally, in Section~\ref{sec:discussion}, we discuss a potential fix of the cryptosystem, as well as an extension of the attack to the rank metric setting~\cite{puchinger2018twisted}. Conclusions are given in Section~\ref{sec:conclusion}.

\section{Preliminaries}
\label{sec:preliminaries}

\subsection{Notation}
Let $\Fq$ denote the finite field of order $q$, where $q$ is a prime power.
Vectors in $\Fq^n$ are row vectors, and we use $\Fq^{m \times n}$ to represent the set of $m\times n$ matrices over $\Fq$.  For $i \in  \{1, \dots, m\}$ and $j \in \{1,\dots, n\}$, the $(i,j)$-th entry of $\bfA \in \Fq^{m \times n}$ is denoted by $A_{i,j}$. The set of invertible matrices of size $m$ over $\FF_q$ is denoted by $\GL_m(\FF_q)$.

Let us fix a finite field extension $\F/\FF_q$. The $\FF$-vector space generated by a subset $S \subset \FF_q^n$ is denoted by $\Span_\FF(S)$. By convention, we also represent the $\FF$-vector space spanned by the rows of $\bfA \in \FF_q^{m \times n}$ by $\Span_\FF(\bfA)$.

A linear code $\calC \subseteq \FF_q^n$ with parameters $[n, k, d]$ is an $\FF_q$-vector space of $\FF_q^n$ of dimension $k$, where $d$ is the minimum Hamming weight $\wt(\bfc) \mydef |\{ i \in \{1,\dots, n\}, c_i \ne 0\} |$ of a non-zero codeword $\bfc \in \calC$. A generator matrix of $\calC$ is a matrix $\bfG \in \FF_q^{k \times n}$ such that $\calC = \Span_{\FF_q}(\bfG).$

Given $\a \in \Fq^n$ and $\b \in \Fq^n$, their componentwise product is defined as $\a \star \b := ( a_1b_1, \hdots, a_nb_n  ) \in \Fq^n$.  Further, we define the Schur-square (or Hadamard-square) of a linear code $\mathcal{C} \subseteq \Fq^n$ as

\begin{equation*}
\mathcal{C}^{(\star 2)} := \Span_{\FF_q}(\{ \a \star \b : \a,\b \in \mathcal{C} \} ).
\end{equation*}

Let $\FF_q[X]$ denote the set of univariate polynomials over $\FF_q$. For a fixed evaluation vector $\alphaVec = ( \alpha_1, \hdots, \alpha_n ) \in \Fq^n$, we define the evaluation map
\[
\begin{array}{rclc}
\evOp_{\alphaVec} : &\Fq[X] &\rightarrow &\Fq^n \\
&f &\mapsto &( f(\alpha_1),f(\alpha_2), \dots, f(\alpha_n) ).
\end{array}
\]

Finally, if $\mathcal{I}, \mathcal{J} \subset \NN$ are two finite subsets of integers, then we define their sumset
\begin{equation*}
  \mathcal{I} + \mathcal{J} := \{a+b \, : \, a\in\mathcal{I},~b\in\mathcal{J} \} \subseteq \NN.
\end{equation*}

\subsection{Twisted Reed--Solomon codes}\label{subsec:trs}

Before introducing TRS codes, let us first recall the definition of (classical) Reed--Solomon codes.

\begin{definition}[Reed--Solomon code]\label{def:rs_definition}
  Let the entries of $\alphaVec = (\alpha_1,\dots,\alpha_n) \in \Fq^n$ be pairwise distinct, and fix $1 \le k \le n$. The \emph{Reed--Solomon (RS) code} of length $n$ and dimension $k$ is defined by
  \begin{equation*}
    \RS_{k,n}[\alphaVec] := \left\{ \ev{f}{\alphaVec} : f \in \FF_q[X], \deg f \le k-1 \right\} \subseteq \FF_q^n.
  \end{equation*}
The entries of $\alphaVec$ are called \emph{locators} of the Reed--Solomon code $\RS_{k,n}[\alphaVec]$.
\end{definition}

RS codes are maximum distance separable (MDS) codes, i.e., they reach the Singleton bound $d \le n- k +1$. They also admit the use of efficient decoding algorithms for an error of weight up to the unique decoding radius $\lfloor \frac{n-k}{2} \rfloor$.

TRS codes were recently constructed as a generalization of RS codes~\cite{beelen2017twisted}. Let us first define a specific subspace of polynomials. Let $\ell \ge 1$, and $n \ge k \ge 1$. Given a vector $\hVec \in \{0,\dots,k-1\}^\ell$ of pairwise distinct increasing \emph{hooks}, a vector $\tVec \in \{1, \dots, n-k\}^\ell$ of pairwise distinct \emph{twists}, and a vector of field coefficients $\etaVec \in (\Fq \setminus \{0\})^\numTwists$, the set of $[\tVec, \hVec, \etaVec]$-twisted polynomials is
\[
\calP_{k,n}[\tVec, \hVec, \etaVec] \mydef \left\{ \sum_{i=0}^{k-1} f_i X^i + \sum_{j=1}^{\numTwists} \eta_j f_{h_j} X^{k-1+t_j} : f_i \in \Fq \right\} \subseteq \F_q[X].
\]

\begin{definition}[Twisted Reed--Solomon code,~\cite{beelen2017twisted}]\label{def:trs}
  Let the entries of $\alphaVec = ( \alpha_1,\dots,\alpha_n ) \in \FF_q^n$ be pairwise distinct, and fix $1 \le k \le n$. Let $\tVec, \hVec, \etaVec$ be defined as above. The \emph{$[\tVec, \hVec, \etaVec]$-twisted Reed--Solomon (TRS) code} of length $n$, dimension $k$ and locators $\alphaVec$ is defined by
  \begin{equation*}
    \TRS_{k,n}[\alphaVec, \tVec, \hVec, \etaVec] := \left\{ \ev{f}{\alphaVec}  \, : \, f \in \calP_{k,n}[\tVec, \hVec, \etaVec] \right\}.
  \end{equation*}
\end{definition}

According to Definition~\ref{def:trs}, a generator matrix of $\TRS_{k,n}[\alphaVec, \tVec, \hVec, \etaVec]$ is given by
\begin{equation*}
  \Gtrs :=
  \begin{mymatrix}
    \ve{1} \\
    \alphaVec^{1} \\
    \vdots \\
    \alphaVec^{h_1-1} \\
    \alphaVec^{h_1} + \eta_1 \alphaVec^{k-1+t_1} \\
    \alphaVec^{h_1+1} \\
    \vdots \\
    \alphaVec^{h_\ell-1} \\
    \alphaVec^{h_\ell} + \eta_\ell \alphaVec^{k-1+t_\ell} \\
    \alphaVec^{h_\ell+1} \\
    \vdots \\
    \alphaVec^{k-1}
  \end{mymatrix},
\end{equation*}
where $\alphaVec^i := (\alpha_1^i, \dots, \alpha_n^i)$ for $i =1,\dots,k-1$.

In~\cite{beelen2018structural}, the authors show that the construction of TRS codes according to Definition~\ref{def:trs} does not necessarily lead to MDS codes. However, they provide a method to obtain a subfamily of MDS TRS codes, cf. Theorem~\ref{theorem:trsmds}.

\begin{theorem}[Explicit MDS TRS codes~\cite{beelen2018structural}]\label{theorem:trsmds}
  Let $q_0$ be a prime power, and $1 = s_0 < \hdots < s_{\ell} \in \mathbb{Z}_{>0}$ be non-negative integers such that $\F_{q_0^{s_0}} \subset \mathbb{F}_{q_0^{s_1}} \subset \hdots \subset \mathbb{F}_{q_0^{s_\ell}} = \Fq$ is a chain of subfields.
  Fix $k < n \leq q_0$ and the entries of $\alphaVec = (\alpha_1, \hdots, \alpha_n) \in \FqZ^n$ as pairwise distinct locators.
  Finally, let $\tVec$, $\hVec$ and $\etaVec$ be chosen as in Definition~\ref{def:trs}, such that $\eta_i \in \mathbb{F}_{q_0^{s_i \vphantom{-1}}}\setminus \mathbb{F}_{q_0^{s_i-1}}$ for $i=1,\hdots,\ell$.
  Then $\TRS_{k,n}[\alphaVec, \tVec, \hVec, \etaVec]$ is MDS.
\end{theorem}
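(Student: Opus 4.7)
My plan is to induct on the number of twists $\ell$, with the base case $\ell = 0$ being immediate: $\calP_{k,n}[\tVec, \hVec, \etaVec]$ then reduces to polynomials of degree at most $k-1$, so the code coincides with $\RS_{k,n}[\alphaVec]$, which is MDS.

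For the inductive step, write $K_j := \F_{q_0^{s_j}}$ and let $\tilde{\calP}$ denote the polynomial space obtained from $\calP_{k,n}[\tVec, \hVec, \etaVec]$ by deleting the last twist (i.e.\ using $(t_1, \ldots, t_{\ell-1})$, $(h_1, \ldots, h_{\ell-1})$, $(\eta_1, \ldots, \eta_{\ell-1})$). By induction applied over the field $K_{\ell-1}$, the subcode defined by the first $\ell-1$ twists is MDS, so all of its $k \times k$ generator-matrix minors are nonzero in $K_{\ell-1}$. Since these entries also lie in $K_\ell$, the $K_\ell$-linear evaluation map $\evOp_{\ve{\beta}}: \tilde{\calP} \to K_\ell^k$ at any $k$ distinct locators $\beta_1, \ldots, \beta_k \in K_0$ is a bijection as well. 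Arguing by contradiction, I suppose a nonzero $f \in \calP_{k,n}[\tVec, \hVec, \etaVec]$ vanishes at $k$ distinct locators $\beta_1, \ldots, \beta_k \in K_0$. The easy case is $f_{h_\ell} = 0$: then $f \in \tilde{\calP}$, and the bijection above forces $f = 0$, a contradiction.

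The delicate case is $f_{h_\ell} \neq 0$. After rescaling so $f_{h_\ell} = 1$, I decompose $f = f' + \eta_\ell X^{k-1+t_\ell}$ with $f' \in \tilde{\calP}$, so the vanishing condition at each $\beta_m$ becomes $\evOp_{\ve{\beta}}(f') = \eta_\ell \ve{v}$ with $\ve{v} := -(\beta_1^{k-1+t_\ell}, \ldots, \beta_k^{k-1+t_\ell}) \in K_0^k$. By $K_\ell$-linearity, $f' = \eta_\ell f'_0$, where $f'_0 \in \tilde{\calP}$ is the unique preimage of $\ve{v}$ under the evaluation map. The key descent step is the following: since $\ve{v} \in K_0^k \subseteq K_{\ell-1}^k$, the $K_{\ell-1}$-bijection furnished by the inductive hypothesis on the subspace $\tilde{\calP} \cap K_{\ell-1}[X]$ already supplies a preimage of $\ve{v}$ with all coefficients in $K_{\ell-1}$; uniqueness of the preimage inside $\tilde{\calP}$ then forces $f'_0$ to coincide with it, so all coefficients of $f'_0$ lie in $K_{\ell-1}$. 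In particular $c := (f'_0)_{h_\ell} \in K_{\ell-1}$. Reading off the coefficient at $X^{h_\ell}$ on both sides of $f = f' + \eta_\ell X^{k-1+t_\ell}$ (the twist term does not contribute since $h_\ell \leq k-1 < k-1+t_\ell$) yields $1 = \eta_\ell c$. The case $c = 0$ is absurd, and otherwise $\eta_\ell = c^{-1} \in K_{\ell-1}$, contradicting $\eta_\ell \notin K_{\ell-1}$.

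I expect the descent step to be the main obstacle, both to identify and to execute cleanly: the hypothesis $\eta_\ell \notin K_{\ell-1}$ only becomes useful once one can extract an equation whose unique solution is forced to have $K_{\ell-1}$-rational coefficients, which requires both a judicious decomposition of $f$ and the observation that $\ve{v}$ is automatically $K_0$-rational because the locators themselves lie in $K_0$. Everything else in the argument reduces to routine linear-algebraic bookkeeping and a clean split between the $f_{h_\ell}=0$ and $f_{h_\ell}\neq 0$ cases.
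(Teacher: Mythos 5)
The paper does not actually prove this theorem: it is imported verbatim from~\cite{beelen2018structural}, so there is no internal proof to compare against. Judged on its own, your argument is correct and complete. The reduction of the MDS property to ``no nonzero $f\in\calP_{k,n}[\tVec,\hVec,\etaVec]$ vanishes at $k$ distinct locators'' is legitimate since $\deg f\le k-1+t_\ell\le n-1$ makes $\evOp_{\alphaVec}$ injective on $\calP_{k,n}[\tVec,\hVec,\etaVec]$; the inductive hypothesis over $K_{\ell-1}$ does upgrade to a $K_\ell$-bijection, because the generator matrix of the $(\ell-1)$-twist code has entries in $K_{\ell-1}$ and a $k\times k$ minor that is nonzero in $K_{\ell-1}$ stays nonzero in $K_\ell$; and the descent step is sound: the unique preimage of $\ve{v}\in K_0^k\subseteq K_{\ell-1}^k$ in $\tilde{\calP}$ must coincide with the preimage supplied by the bijection restricted to $\tilde{\calP}\cap K_{\ell-1}[X]$, forcing $c=(f_0')_{h_\ell}\in K_{\ell-1}$. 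Reading off the coefficient of $X^{h_\ell}$ is also safe because the hooks are pairwise distinct and every twist exponent $k-1+t_j$ exceeds $k-1\ge h_\ell$, so the identity $1=\eta_\ell c$ holds and contradicts $\eta_\ell\in\mathbb{F}_{q_0^{s_\ell}}\setminus\mathbb{F}_{q_0^{s_{\ell-1}}}$. The overall strategy --- induction on the number of twists, a case split on $f_{h_\ell}$, and extracting an equation that would place $\eta_\ell$ in the smaller field --- matches the one used in the cited source, but your unique-preimage formulation is a clean, self-contained way to execute the key step.
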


A decoding algorithm for TRS codes is also proposed in~\cite{beelen2018structural}. Given a corrupted codeword $\ve{r} = \bfc + \bfe \in \FF_q^n$, where $\bfc \in \TRS_{k,n}[\alphaVec, \tVec, \hVec, \etaVec]$, the strategy is to guess $\ell$ elements $g_1,\hdots,g_{\ell} \in \Fq$ and then to decode $\ve{r} - \ev{\sum_{i=1}^{\ell} g_i \eta_i X^{t_i+k-1} }{\alphaVec}$ in the Reed--Solomon code $\RS_{k,n}[\alphaVec]$. This approach succeeds if $g_i = f_{h_i}$ and thus admits a worst case complexity in $O(q^\ell n \log^2 n \log \log n)$. Notice that for the explicit family presented in Theorem~\ref{theorem:trsmds}, we have $q = \Omega(q_0^{2^\ell})$, hence this decoding algorithm is only practical for a tiny number of twists.

The following lemma shows that TRS codes are invariant under specific transformations of their parameters. This property is a key element for the cryptanalysis of the system, and could be of independent interest.

\begin{lemma}\label{lemma:equiCode}
 Let $\alphaVec$, $\tVec$, $\hVec$ and $\etaVec$ be defined as in Definition~\ref{def:trs}. Then for any $a \in \Fq\setminus \{0\}$,
 \begin{equation*}
\TRS_{k,n}[\alphaVec, \tVec, \hVec, \etaVec] = \TRS_{k,n}[\hat{\alphaVec}, \tVec, \hVec, \hat{\etaVec}],
 \end{equation*}
 where $\alphaVecH = a \alphaVec$ and $\etaVecH = ( \hat{\eta}_1, \hdots, \hat{\eta}_{\ell} )$ with $ \hat{\eta}_i = \eta_i a^{-(k-1+t_i-h_i)}$, $1\leq i\leq\ell$.
\end{lemma}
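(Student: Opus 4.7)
The plan is to establish the equality by a simple change-of-variable argument on the evaluation polynomials. Given a codeword $\ev{f}{\alphaVec} \in \TRS_{k,n}[\alphaVec, \tVec, \hVec, \etaVec]$, I will exhibit a polynomial $g \in \calP_{k,n}[\tVec, \hVec, \etaVecH]$ such that $\ev{g}{\alphaVecH} = \ev{f}{\alphaVec}$, and then invoke symmetry (via $a^{-1}$) to get the reverse inclusion.

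The natural candidate is $g(X) \mydef f(X/a)$, since then $g(\hat{\alpha}_i) = g(a\alpha_i) = f(\alpha_i)$ for all $i$, which yields $\ev{g}{\alphaVecH} = \ev{f}{\alphaVec}$. It remains to check that $g$ lies in $\calP_{k,n}[\tVec, \hVec, \etaVecH]$. Writing $f = \sum_{i=0}^{k-1} f_i X^i + \sum_{j=1}^{\ell} \eta_j f_{h_j} X^{k-1+t_j}$ and substituting gives
\begin{equation*}
g(X) = \sum_{i=0}^{k-1} (f_i a^{-i}) X^i + \sum_{j=1}^{\ell} \eta_j f_{h_j} a^{-(k-1+t_j)} X^{k-1+t_j}.
\end{equation*}
Setting $g_i \mydef f_i a^{-i}$ for $0 \le i \le k-1$, the ``information coefficient'' at position $h_j$ becomes $g_{h_j} = f_{h_j} a^{-h_j}$, so the coefficient of $X^{k-1+t_j}$ rewrites as $\eta_j a^{-(k-1+t_j-h_j)} g_{h_j} = \hat{\eta}_j g_{h_j}$, which is exactly the required twisted form with coefficient vector $\etaVecH$.

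This proves the inclusion $\TRS_{k,n}[\alphaVec, \tVec, \hVec, \etaVec] \subseteq \TRS_{k,n}[\alphaVecH, \tVec, \hVec, \etaVecH]$. The reverse inclusion follows by applying the exact same argument starting from $\alphaVecH$ and the scalar $a^{-1} \in \Fq \setminus \{0\}$: one checks that the relation $\hat{\eta}_j = \eta_j a^{-(k-1+t_j-h_j)}$ is equivalent to $\eta_j = \hat{\eta}_j (a^{-1})^{-(k-1+t_j-h_j)}$, so the roles of $\etaVec$ and $\etaVecH$ are interchangeable.

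The only delicate step is the bookkeeping on the twist coefficients, namely verifying that the exponent shift $-i$ introduced in the regular monomials combines correctly with the shift $-(k-1+t_j)$ introduced in the twisted monomials so as to produce precisely the claimed $\hat{\eta}_j$; everything else reduces to the identity $g(a\alpha_i) = f(\alpha_i)$. No deeper property of TRS codes (such as the MDS criterion of Theorem~\ref{theorem:trsmds}) is needed for this lemma.
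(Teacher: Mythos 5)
Your proof is correct and follows essentially the same route as the paper: a change of variables $X \mapsto X/a$ (the paper uses $X \mapsto aX$ for the reverse inclusion first, but this is the same substitution argument) together with the coefficient bookkeeping showing that the twisted coefficient transforms into $\hat{\eta}_j = \eta_j a^{-(k-1+t_j-h_j)}$. The symmetry argument for the converse inclusion matches the paper's remark that the other direction is ``similar since $a$ is non-zero.''
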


\begin{proof}
  Let $\ev{f}{\alphaVecH} \in \TRS_{k,n}[\hat{\alphaVec}, \tVec, \hVec, \hat{\etaVec}]$, where $f(X) = \sum_{i=0}^{k-1} f_i X^i + \sum_{j=1}^{\ell} \hat{\eta}_j f_{h_j} X^{k-1+t_j}$. We have
  \[
  f(a X) = \sum_{i=0}^{k-1} (f_i a^i) X^i + \sum_{j=1}^{\ell} (\hat{\eta}_j a^{k-1+t_j-h_j}) (f_{h_j}a^{h_j}) X^{k-1+t_j} = g(X)\,,
  \]
  where $g(X) \in \calP_{k,n}[\tVec, \hVec, \etaVec]$. Hence by definition $\ev{f}{\alphaVecH} \in \TRS_{k,n}[\alphaVec, \tVec, \hVec, \etaVec]$, and it follows that $\TRS_{k,n}[\hat{\alphaVec}, \tVec, \hVec, \hat{\etaVec}] \subseteq  \TRS_{k,n}[\alphaVec, \tVec, \hVec, \etaVec]$. The proof on the converse inclusion is similar since $a$ is non-zero. \hfill~\qed
\end{proof}

\section{The variant of the McEliece cryptosystem using TRS codes}
\label{sec:trssystem}

 \subsection{Definition of the cryptosystem}
 \label{subsec:definition-cryptosystem}

 \paragraph{Setup.} Fix a prime power $q_0$, and integers $k < n \leq q_0-1$ with $2\sqrt{n}+6<k \leq \frac{n}{2}-2$. Also fix $\ell \in \mathbb{Z}_{>0}$ satisfying
 \[
 \frac{n+1}{k-\sqrt{n}}< \ell + 2 < \min\Big\{k+3;\frac{2n}{k}; \sqrt{n}-2 \Big\}\,.
 \]
 Further, set $q_i := q_{i-1}^2 = q_0^{2^i}$ for $ i= 1,\hdots, \ell$, such that $ \FqZ \subset \FqO \subset \hdots \subset \Fqell = \Fq$
 is a chain of subfields. Finally, set $t_i = (i+1)(r-2)-k+2$ and $h_i = r-1+i$ for $i =1,\hdots, \ell$,  where $r:= \lceil \frac{n+1}{\ell+2} \rceil +2$.

 Integers $q_0$, $n$, $k$, $\ell$, and tuples $\ve{t}$, $\ve{h}$ satisfying the above conditions are referred to as \emph{valid parameters} of the cryptosystem~\cite{beelen2018structural}. They are public parameters.

\paragraph{Key generation.} Given valid parameters  $q_0$, $n$, $k$, $\ell$,  $\ve{t}$ and $\ve{h}$, a pair of public/private keys is generated as follows.
\begin{enumerate}
\item Choose $\alphaVec \in \FqZ^{n}$ at random such that the entries of $\alphaVec$ are pairwise distinct.
\item Choose $\etaVec \in \Fq^{\ell}$ at random such that $\eta_i \in \Fqi \setminus \Fqii$ for $i = 1, \dots, \ell$.
\item Choose $\S \in \GL_k(\FF_q)$ at random.
\item Output the public key $\Gpub = \S \Gtrs \in \Fq^{k \times n}$, where $\Gtrs$ is the generator matrix of $\TRS_{k,n}[\alphaVec, \tVec, \hVec, \etaVec]$ described in Section~\ref{subsec:trs}.
\end{enumerate}
The private key consists of $(\S, \alphaVec, \etaVec)$ and the public key is $\Gpub$.

\paragraph{Encryption.} Given a plaintext $\plain \in \Fq^{k}$ and the public key $\Gpub$, the ciphertext is generated as follows.
\begin{enumerate}
\item Choose $\e \in \Fq^{n}$ at random with Hamming weight $\wt(\e) = \lfloor \frac{n-k}{2} \rfloor$.
\item Output the ciphertext
  \begin{equation*}
\cipher \mydef \plain \Gpub + \e \in \Fq^n.
    \end{equation*}
\end{enumerate}

\paragraph{Decryption.} Given a ciphertext $\cipher \in \Fq^n$ and the private key $(\S,\alphaVec,\etaVec)$, the decryption algorithm can be described as follows.
\begin{enumerate}
\item Decode $\cipher$ to $\plainT = \plain \S \in \Fq^{k}$ using the decoding algorithm of $\TRS_{k,n}[\alphaVec, \tVec, \hVec, \etaVec]$ given in~\cite{beelen2018structural}.
\item Output the plaintext $\plain = \plainT \S^{-1}$.
\end{enumerate}

\paragraph{Proposed Parameters.} The designers of the system proposed the parameters listed in Table~\ref{tab:proposed-param}~\cite{beelen2018structural}. Recall that the public code is defined over the field $\FF_q = \FF_{q_0^{2^\ell}}$.
\begin{table}[h!]
  \centering
  \begin{tabular}{cccccc}
    \hline
    $q_0$ & $n$    & $k$ & $\ell$ & $\ve{t}$ & $\ve{h}$ \\
    \hline
    $256$ & $255$ & $117$ & $1$   & $(57)$   & $(88)$ \\
    \hline
  \end{tabular}
  \caption{\label{tab:proposed-param}Parameters proposed in~\cite{beelen2018structural} for a claimed security $\ge 100$ bits.}
\end{table}

There are two main reasons for choosing a small number of twists. On the one hand, the complexity of the decoding algorithm proposed in~\cite{beelen2018structural} is in $O( q_0^{\ell2^{\ell}} n \log^2 n \log \log n)$ and thus increases doubly exponentially with the number of twists. On the other hand, the number of elements in the largest field $\FF_q$ also scales exponentially with the number of twists, which impacts the key sizes.

\correct{
\subsection{Resistance to some known key-recovery algebraic attacks}
}

\correct{As mentioned in Section~\ref{sec:intro}, Beelen \emph{et al.} showed that some existing attacks cannot be \emph{directly} mounted on their system~\cite{beelen2018structural}. Let us recall these attacks and explain why they are ineffective.}

\correct{
\paragraph{Sidelnikov--Shestakov attack.} In~\cite{sidelnikov1992attack}, Sidelnikov and Shestakov presented an attack on a variant of the McEliece cryptosystem using GRS codes. The attack uses two key facts: first, for MDS codes it is easy to find minimal-weight codewords with a given support, by running a simple Gaussian elimination; second, the ratio between two minimial-weight codewords of a GRS code, whose supports differ in only two coordinates, gives a rational function of degree one. Using these properties, the recovery of an alternate public key (\emph{i.e.} an algebraic description of the public code as a GRS code) reduces to solving linear systems of equations involving the coefficients of the rational functions and the parameters of the GRS code. Formally, the result of Sidelnikov and Shestakov~\cite{sidelnikov1992attack} can be summarized as follows.
\begin{theorem}[Sidelnikov--Shestakov~\cite{sidelnikov1992attack}]\label{thm:sidelnikov-shestakov}
  Let $\RS_{k,n}[\alphaVec]$ be a Reed--Solomon code with locators $\alphaVec = (\alpha_1,\hdots,\alpha_n) \in \FqZ^n$. Given any generator matrix of $\RS_{k,n}[\alphaVec]$, there exists an algorithm which determines in time $O(n^4)$ a vector $\alphaVecP \in \FqZ^{n}$ such that
\begin{equation*}
\RS_{k,n}[\alphaVec] = \RS_{k,n}[\alphaVecP].
\end{equation*}
In particular, it holds that $\alphaVecP = a \alphaVec + b \ve{1} := ( a \alpha_1 + b, \dots, a\alpha_n + b )$ with $a\in\FqZ\setminus\{0\}$ and $b\in\FqZ$.
\end{theorem}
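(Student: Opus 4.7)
The plan follows the classical approach of Sidelnikov and Shestakov: recover the locators $\alphaVec$ (up to affine equivalence) from explicitly computable minimum-weight codewords. The first step is to exploit the MDS property. For any index set $J \subset \{1,\dots,n\}$ with $|J| = k-1$, there exists a unique (up to scalar) codeword $\bfc^{(J)} \in \RS_{k,n}[\alphaVec]$ vanishing exactly on $J$, and such a codeword can be extracted from any given generator matrix by Gaussian elimination in $O(n^3)$ field operations. Since $\RS_{k,n}[\alphaVec]$ consists of evaluations of polynomials of degree at most $k-1$, this codeword is the evaluation of $f_J(X) = \lambda_J \prod_{j \in J}(X - \alpha_j)$, so that $c^{(J)}_i = \lambda_J \prod_{j \in J}(\alpha_i - \alpha_j)$ for every $i \notin J$.

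The second step is to take ratios of such codewords that share all but one of their zero positions. For two sets $J_1, J_2$ of size $k-1$ differing in a single element $j_1$ vs.\ $j_2$, all but one factor cancels and the ratio $c^{(J_1)}_i / c^{(J_2)}_i$ reduces, for $i \notin J_1 \cup J_2$, to a Möbius expression $\mu \cdot (\alpha_i - \alpha_{j_1})/(\alpha_i - \alpha_{j_2})$, where $\mu = \lambda_{J_1}/\lambda_{J_2}$ is independent of~$i$. Cross-multiplying yields polynomial identities that become linear in $\alpha_i$ once $\alpha_{j_1}$, $\alpha_{j_2}$ and $\mu$ are known. At this point the affine invariance of RS codes --- the substitution $X \mapsto aX + b$ preserves the degree constraint, so $\RS_{k,n}[\alphaVec] = \RS_{k,n}[a\alphaVec + b\ve{1}]$ --- provides a two-parameter freedom that lets me normalise $\alpha'_1 = 0$ and $\alpha'_2 = 1$. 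A further scalar freedom on the $\lambda_J$'s fixes one $\mu$, after which the remaining unknowns satisfy linear equations over $\FF_{q_0}$.

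The third step is to organise the chosen pairs $(J_1, J_2)$ into a cascade of overlapping index sets so that each unknown locator $\alpha'_i$ and each auxiliary scalar $\mu$ is determined linearly from previously recovered quantities, starting from the normalised pivots $\alpha'_1, \alpha'_2$. A careful chain-like selection of $O(n)$ such ratios suffices to pin down the entire locator vector $\alphaVecP$ uniquely, and one verifies \emph{a posteriori} that $\RS_{k,n}[\alphaVecP] = \RS_{k,n}[\alphaVec]$ by the affine-invariance argument above. For the complexity, $O(n)$ basic codewords must be computed, each at cost $O(n^3)$, and the final linear system is solved in $O(n^3)$ as well, giving the overall $O(n^4)$ bound. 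The main obstacle --- and the technical heart of the Sidelnikov--Shestakov argument --- is choosing the index sets so that every resulting equation is non-degenerate (none of the $\mu$-scalars vanish, no two intermediate locators collide, the chain of substitutions never stalls) and so that the cumulative system has a unique solution modulo the two-parameter affine group; this is handled by a combinatorial argument on the supports of the $\bfc^{(J)}$'s together with the MDS property.
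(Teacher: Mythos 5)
Your proposal reconstructs exactly the classical Sidelnikov--Shestakov argument that the paper itself only sketches and otherwise delegates to the original reference: minimum-weight codewords with prescribed support obtained by Gaussian elimination, ratios of two such codewords with supports differing in one position giving a degree-one rational function, normalisation via the affine invariance of RS codes, and a chain of linear equations recovering the remaining locators in $O(n^4)$. This matches the paper's own description of the attack, so no further comparison is needed.
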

}

\correct{However for TRS codes, the ratio of two minimal-weight codewords with close support is a high degree rational function involving many coefficients. This property prevents a direct use of Sidelnikov--Shestakov's attack.}
\correct{
  \paragraph{Wieschebrink attack.}
  In order to attack a variant of McEliece cryptosystem using random subcodes of GRS codes, Wieschebrink considered the following structural properties. Let $\calC$ be a random subcode of dimension $k-m$ of a GRS code of dimension $k$, with $m$ small compared to $k$. With high probability, the Schur square $\calC^{(\star 2)}$ is a GRS code of dimension $\min \{ n, 2k-1 \}$. If $k < n/2$, a Sidelnikov--Shestakov attack can be applied to recover the secret parameters. Otherwise, one can shorten the public code to fulfill the latter condition, since a shortened RS code is again a RS code.
}

\correct{
  As proved by the designers of the cryptosystem, Wieschebrink's idea cannot be directly applied to TRS codes, due to a smart choice of parameters: the Schur square of the public code has dimension $n$, and shortening techniques seem unappropriate since the family of TRS codes is not stable under this operation. We will see in the following section that restricting TRS codes to subfields however leaks the algebraic structure of the public code.}

\section{An efficient key-recovery attack using subfield subcodes}\label{sec:attack}

This section presents an efficient key-recovery algorithm for the cryptosystem with the parameters proposed in~\cite{beelen2018structural}. The algorithm first determines a linear transformation of the secret locators $\alphaVec$ by exploiting structural properties of the \emph{subfield subcode} of the public code. Then, the algorithm finds the coefficients of the twist monomials by Lagrange interpolation. The algorithm finally outputs $(\hat{\S},\alphaVecH,\etaVecH)$ such that $\hat{\S}\GtrsH = \Gpub$. As shown in Lemma~\ref{lemma:equiCode}, $(\hat{\S},\alphaVecH,\etaVecH)$ is a valid private key that can be used in the decryption algorithm (see Section~\ref{subsec:definition-cryptosystem}).

\subsection{Key-recovery algorithm}
\label{sec:key-recovery}

\subsubsection{First step: recovery of an affine transformation of the secret locators}
Let us consider the $\F_{q_0}$-subfield subcode of the code $\mathcal{C}_{\rm pub}$ spanned by the public generator matrix $\Gpub$. We first state a technical lemma.
\begin{lemma}
  \label{lem:subfield-interpolation}
  Let the entries of $\alphaVec = (\alpha_1, \dots, \alpha_n) \in \F_{q_0}^n$ be pairwise distinct. Further, let $P \in \F_q[X]$ where $\F_q$ is an extension of $\F_{q_0}$, such that $\deg(P) < n$. Then, $\evOp_{\alphaVec}(P) \in \F_{q_0}^n$ if and only if $P \in \F_{q_0}[X]$.
\end{lemma}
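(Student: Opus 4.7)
The plan is to argue by uniqueness of polynomial interpolation. The ``if'' direction is immediate: when $P$ has coefficients in $\F_{q_0}$ and every $\alpha_i$ lies in $\F_{q_0}$, each evaluation $P(\alpha_i)$ is a polynomial expression in elements of $\F_{q_0}$, hence lies in $\F_{q_0}$. So the content is all in the converse.

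For the converse, I would set $y_i := P(\alpha_i) \in \F_{q_0}$ and consider the Lagrange interpolation polynomial
\[
L(X) := \sum_{i=1}^{n} y_i \prod_{j \neq i} \frac{X - \alpha_j}{\alpha_i - \alpha_j} \;\in\; \F_{q_0}[X].
\]
Since the $\alpha_i$ are pairwise distinct and lie in $\F_{q_0}$, each Lagrange basis polynomial has coefficients in $\F_{q_0}$, and the $y_i$ are in $\F_{q_0}$, so $L \in \F_{q_0}[X]$, and clearly $\deg L < n$. Both $P$ and $L$ are polynomials in $\F_q[X]$ of degree strictly less than $n$ agreeing at the $n$ distinct points $\alpha_1,\ldots,\alpha_n$, so by the standard uniqueness statement for polynomial interpolation (applied over the field $\F_q$), we conclude $P = L \in \F_{q_0}[X]$.

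The proof is essentially a one-liner and I do not anticipate any genuine obstacle; the only point deserving attention is to invoke the uniqueness of the interpolant over the larger field $\F_q$ rather than over $\F_{q_0}$, so that one may legitimately equate the arbitrary $P \in \F_q[X]$ with the explicitly constructed $L \in \F_{q_0}[X] \subseteq \F_q[X]$. Equivalently, one can phrase this as: the $\F_{q_0}$-linear evaluation map $\evOp_{\alphaVec} \colon \F_{q_0}[X]_{<n} \to \F_{q_0}^n$ is a bijection (injectivity by degree, surjectivity by dimension count), so any $P \in \F_q[X]_{<n}$ whose image lies in $\F_{q_0}^n$ must coincide with the unique $\F_{q_0}$-preimage.
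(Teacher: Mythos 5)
Your proof is correct and follows essentially the same route as the paper: both construct an interpolating polynomial over $\F_{q_0}$ through the $n$ points and then invoke injectivity of $\evOp_{\alphaVec}$ (equivalently, uniqueness of the degree-$<n$ interpolant) over the larger field $\F_q$ to conclude $P$ equals it. If anything, your version is slightly more careful about the degree bounds than the paper's, which loosely writes ``degree $\le n$'' and ``degree $<q_0$'' where ``degree $<n$'' is what is actually used.
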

\begin{proof}
  Let $\c = \evOp_{\alphaVec}(P)$ and assume that $\c \in \F_{q_0}^n$. Since $\alphaVec \in \F_{q_0}^n$ and $n \le q_0$, there exists a polynomial $Q \in  \F_{q_0}[X]$ of degree $\le n$ such that $\c = \evOp_{\alphaVec}(Q)$. Moreover, $\evOp_{\alphaVec}$ is injective over the $\F_q$-subspace of polynomials of degree $<q_0$, hence $P=Q$. The converse is straightforward. \hfill~\qed
\end{proof}

Let us now define $\Ifree:=\{0,1,\hdots,k-1\}\setminus\{h_1,\hdots,h_{\ell}\}$ as the set of exponents of monomials which do not support twists\footnote{\correct{Since the parameters $k$ and $h_1,\hdots,h_{\ell}$ are public, an attacker knows the set $\mathcal{I}$.}}. \correct{For valid parameters, $\Ifree = \{0,1, \dots, r-1 \} \cup \{r+\ell,\hdots,k-1\}$ since $h_i = r-1+i$ for each $1 \le i \le \ell$.} We can now prove the following characterization of subfield subcodes of TRS codes with valid parameters.

\begin{proposition}\label{prop:subcode}
  Let $\TRS_{k,n}[\alphaVec, \tVec, \hVec, \etaVec]$ be chosen with valid parameters, as described in Section~\ref{sec:trssystem}. Define $\mathcal{I} = \{0,1,\hdots,k-1\}\setminus\{h_1,\hdots,h_{\ell}\}$ as above. Then,
\begin{equation*}
  \TRS_{k,n}[\alphaVec, \tVec, \hVec, \etaVec] \cap \FqZ^n = \Span_{\FF_{q_0}}\big( \{ \ev{X^i}{\ve{\alpha}}, i \in \Ifree \}\big)\,.
\end{equation*}
\end{proposition}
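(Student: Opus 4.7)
The plan is to prove the claim by double inclusion. The inclusion ``$\supseteq$'' is straightforward: for each $i \in \Ifree$, the monomial $X^i$ already belongs to $\calP_{k,n}[\tVec, \hVec, \etaVec]$ (take $f_i = 1$ and all other coefficients equal to zero, which in particular kills every twist term), so $\ev{X^i}{\alphaVec}$ lies in $\TRS_{k,n}[\alphaVec, \tVec, \hVec, \etaVec]$. Since $\alphaVec \in \FqZ^n$, this evaluation is in $\FqZ^n$ as well; taking $\FqZ$-linear combinations preserves both memberships.

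For the hard inclusion ``$\subseteq$'', I would pick an arbitrary codeword $\c \in \TRS_{k,n}[\alphaVec, \tVec, \hVec, \etaVec] \cap \FqZ^n$ and write $\c = \ev{f}{\alphaVec}$ for some
\[
f(X) = \sum_{i=0}^{k-1} f_i X^i + \sum_{j=1}^{\ell} \eta_j f_{h_j} X^{k-1+t_j} \in \calP_{k,n}[\tVec, \hVec, \etaVec].
\]
The key observation is that the exponents appearing in $f$ are the elements of $\{0,\dots,k-1\}$ together with the pairwise distinct values $k-1+t_j$, all of which are at least $k$, so these two index sets are disjoint and the coefficients of $f$ as a univariate polynomial are exactly the $f_i$ (for $i \in \{0,\dots,k-1\}$) and the products $\eta_j f_{h_j}$ (for $1 \le j \le \ell$). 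Furthermore $\deg f \le k-1 + \max_j t_j \le k-1 + (n-k) = n-1$, since by Definition~\ref{def:trs} each twist $t_j$ lies in $\{1,\dots,n-k\}$. Hence Lemma~\ref{lem:subfield-interpolation} applies and yields $f \in \FqZ[X]$.

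The final step is to rule out any nonzero twist coefficient. Having $f \in \FqZ[X]$ gives $f_i \in \FqZ$ for every $i \in \{0,\dots,k-1\}$ and $\eta_j f_{h_j} \in \FqZ$ for every $j$. Suppose for contradiction that $f_{h_j} \ne 0$ for some $j$. Then $f_{h_j} \in \FqZ^*$ and $\eta_j f_{h_j} \in \FqZ$, so $\eta_j = (\eta_j f_{h_j}) \cdot f_{h_j}^{-1} \in \FqZ$. But by construction of the valid parameters, $\eta_j \in \FF_{q_j} \setminus \FF_{q_{j-1}}$ while $\FqZ \subseteq \FF_{q_{j-1}}$ (since $j \ge 1$), contradicting $\eta_j \in \FqZ$. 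Therefore $f_{h_j} = 0$ for every $j$, the twist terms of $f$ disappear, and $f = \sum_{i \in \Ifree} f_i X^i$ is an $\FqZ$-linear combination of monomials indexed by $\Ifree$, which gives exactly $\c \in \Span_{\FqZ}\{\ev{X^i}{\alphaVec} : i \in \Ifree\}$.

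The only nontrivial point is the bookkeeping in the middle step: one needs to be careful that the exponents of the twist terms and the ``free'' monomials do not collide, so that identifying coefficients is unambiguous, and that $\deg f < n$ so that Lemma~\ref{lem:subfield-interpolation} can be invoked. Both are immediate from the definition of TRS codes, so the proof reduces to the simple division argument $\eta_j = (\eta_j f_{h_j})/f_{h_j}$ together with the chain-of-subfields property of the $\eta_j$, which is exactly the hypothesis that makes the explicit MDS TRS family of Theorem~\ref{theorem:trsmds} well-defined.
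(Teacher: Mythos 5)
Your proof is correct and follows essentially the same route as the paper: the easy inclusion is verified directly, and the hard inclusion uses Lemma~\ref{lem:subfield-interpolation} (after checking $\deg f < n$) to get $f \in \FqZ[X]$, then kills the twist terms because $\eta_j \notin \FqZ$. The only difference is that you spell out the final step --- the identity $\FqZ[X] \cap \calP_{k,n}[\tVec, \hVec, \etaVec] = \Span_{\FqZ}(\{X^i, i \in \Ifree\})$ via the division argument $\eta_j = (\eta_j f_{h_j})f_{h_j}^{-1}$ --- which the paper leaves as a remark to "notice".
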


\begin{proof}
Let us denote $\mathcal{S} = \Span_{\FF_{q_0}}\big( \{ \ev{X^i}{\ve{\alpha}}, i \in \Ifree \}\big)$ and $\Cpub = \TRS_{k,n}[\alphaVec, \tVec, \hVec, \etaVec]$.  
  First, it is clear that $\mathcal{S} \subseteq \Cpub \cap \FqZ^n$. Indeed, for $i \in \Ifree$ we have $\ev{X^i}{\alphaVec} \in \Cpub$, and since $\alphaVec$ is a vector over $\FqZ$, it yields that $\ev{X^i}{\alphaVec} \in \FqZ^n$.
  
  Conversely, let $\c = \ev{f}{\alphaVec} \in \Cpub \cap \FqZ^n$, where $f \in \calP_{k,n}[\tVec, \hVec, \etaVec]$. Lemma~\ref{lem:subfield-interpolation} ensures that $f \in \F_{q_0}[X]$, since $\deg(f) < n$ for valid parameters. It remains to notice that $\F_{q_0}[X] \cap \calP_{k,n}[\tVec, \hVec, \etaVec] = \Span_{\FF_{q_0}}( \{ X^i, i \in \Ifree \})$. \hfill~\qed
\end{proof}

\correct{
We observe by Proposition~\ref{prop:subcode} that the subfield subcode $\Csub := \TRS_{k,n}[\alphaVec, \tVec, \hVec, \etaVec] \cap \FqZ^n$ is a proper non-MDS subcode of the RS code $\RS_{k,n}[\alphaVec]$. Thus, one cannot directly use a Sidelnikov--Shestakov attack~\cite{sidelnikov1992attack} on $\Csub$. In 2006, Wieschebrink mounted an attack on cryptosystems based on \emph{random} subcodes of RS codes~\cite{Wieschebrink2010}. The author's idea is that, with very high probability over the chosen subcode $\mathcal{C}'$, the square code $\mathcal{C}'^{(\star 2)}$ is a RS code. A Sidelnikov--Shestakov attack can then be used on $\mathcal{C}'^{(\star 2)}$ to recover the private parameters.
}

\correct{
In the following, we prove that for most valid parameters defined in~\cite{beelen2018structural}, and for \emph{all} practical ones, the square code $\Csub^{(\star 2)}$ is a RS code subject to a Sidelnikov--Shestakov attack.}
\begin{proposition}\label{prop:squareCode}
  Let $q_0$, $n$, $k$, $\ell$, $\ve{t}$ and $\ve{h}$ be valid parameters, and assume that $\ell \le \frac{1}{2}(\sqrt{n} - 3)$. Let $\Csub = \TRS_{k,n}[\alphaVec, \tVec, \hVec, \etaVec] \cap \FqZ^n$. Then,
\begin{equation*}
(\Csub)^{(\star 2)} = \RS_{2k-1,n}[\alphaVec].
\end{equation*}
\end{proposition}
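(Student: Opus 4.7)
The plan is to reduce the claim to a sumset identity in $\mathbb{N}$ and then verify that identity using the explicit shape of the valid parameters.

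First, by Proposition~\ref{prop:subcode}, the subcode $\Csub$ has the explicit $\FqZ$-basis $\{\ev{X^i}{\alphaVec} : i \in \Ifree\}$. Since $\ev{X^i}{\alphaVec} \star \ev{X^j}{\alphaVec} = \ev{X^{i+j}}{\alphaVec}$, the Schur square $(\Csub)^{(\star 2)}$ is spanned over $\FqZ$ by the family $\{\ev{X^s}{\alphaVec} : s \in \Ifree+\Ifree\}$. Because $n \le q_0$ and the entries of $\alphaVec$ are distinct, the evaluation map is injective on polynomials of degree less than $n$ (as used in Lemma~\ref{lem:subfield-interpolation}). In particular, $\{\ev{X^s}{\alphaVec} : 0 \le s \le 2k-2\}$ is a basis of $\RS_{2k-1,n}[\alphaVec]$, so the target equality reduces to the combinatorial identity
\[
\Ifree + \Ifree = \{0, 1, \ldots, 2k-2\}.
\]

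Second, I would decompose $\Ifree = A \cup B$ with $A = \{0, \ldots, r-1\}$ and $B = \{r+\ell, \ldots, k-1\}$, exploiting the fact that for valid parameters $h_i = r-1+i$, so the hooks fill the interval $\{r, \ldots, r+\ell-1\}$. Direct sumset arithmetic on intervals gives
\begin{align*}
A+A &= \{0, \ldots, 2r-2\}, \\
A+B &= \{r+\ell, \ldots, r+k-2\}, \\
B+B &= \{2r+2\ell, \ldots, 2k-2\}.
\end{align*}
Hence $\Ifree+\Ifree$ fills the whole interval $\{0, \ldots, 2k-2\}$ if and only if the two gap-closing conditions (i)~$r+\ell \le 2r-1$ and (ii)~$2r+2\ell \le r+k-1$ hold, that is, $\ell \le r-1$ and $k \ge r+2\ell+1$.

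Third, I would verify these two conditions using the parameter constraints stated in Section~\ref{sec:trssystem}. Condition (i) is immediate since $r \ge \frac{n+1}{\ell+2} + 2$, which largely exceeds $\ell+1$ whenever $\ell \le \frac{1}{2}(\sqrt n - 3)$. For condition (ii), the valid-parameter inequality $\frac{n+1}{k-\sqrt n} < \ell+2$ rearranges to $k > \frac{n+1}{\ell+2} + \sqrt n$, while $r \le \frac{n+1}{\ell+2} + 3$; subtracting yields $k - r > \sqrt n - 3 \ge 2\ell$ by the assumption on $\ell$, and since $k-r$ is an integer, $k-r \ge 2\ell+1$.

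The main obstacle, and the only place where the hypothesis $\ell \le \frac{1}{2}(\sqrt n - 3)$ is actually used, is condition (ii): it expresses that the right block $B+B$ of the sumset reaches far enough left to meet $A+B$, which is exactly what fails when $\ell$ becomes too large relative to $n$. This is why the statement covers most, but not literally all, valid parameter sets.
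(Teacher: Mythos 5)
Your proof is correct and follows essentially the same route as the paper's: reduce the claim to the sumset identity $\Ifree+\Ifree=\{0,\dots,2k-2\}$ via Proposition~\ref{prop:subcode}, split $\Ifree$ into the two intervals $\{0,\dots,r-1\}$ and $\{r+\ell,\dots,k-1\}$, and close the two gaps using $\ell\le r-1$ and $k\ge r+2\ell+1$, the latter being where the hypothesis $\ell\le\frac{1}{2}(\sqrt{n}-3)$ enters. Your derivation of condition (ii) from the valid-parameter constraints is in fact more explicit than the paper's, which simply asserts that the hypothesis ``leads us to $\ell\le\frac{k-r-1}{2}$.''
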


\begin{proof}
  \correct{
  We use the notation and the results of Proposition~\ref{prop:subcode}. This yields
\[
  \begin{aligned}
    (\Csub)^{(\star 2)} &= \Span_{\FF_{q_0}}\big(\{ \ev{X^i}{\alphaVec} \star \ev{X^j}{\alphaVec}\,:\, (i,j) \in \Ifree \}\big) \\
    &= \Span_{\FF_{q_0}} \big(\{ \ev{X^i}{\alphaVec}\,:\, i \in \Ifree + \Ifree \}\big)  \,.
  \end{aligned}
  \]
  As a consequence, the theorem holds if $\mathcal{I} + \mathcal{I} = \{0,\hdots,2k-2\}$.
  }
  
\correct{
  Notice that for valid parameters, we have $2k-1 \le n -3$ and $\mathcal{I} = \Ifree_1 \cup \Ifree_2$, where $\Ifree_1 = \{0, \dots, r-1\}$, $\Ifree_2 = \{r+\ell,\dots,k-1\}$ and $r = \lceil \frac{n+1}{\ell+2} \rceil + 2$.  We have  $\{ 0 \} + \Ifree = \{0, \dots, r-1\}$, $\Ifree_1 + \Ifree_2 = \{r+\ell,\dots,k+r-2\}$ and $\{k-1\} +  \Ifree_2 = \{k+r+\ell-1, \dots, 2k-2 \}$, hence it is clear that $\mathcal{I} + \mathcal{I}$ contains the subset
  \[
  \{0, \dots, r-1\} \cup \{r+\ell,\dots,k+r-2\} \cup \{k+r+\ell-1, \dots, 2k-2 \}\,.
  \]
  Moreover one can easily check that if $\ell \le r-1$, then $\{ r, \dots, r + \ell -1 \} \subseteq \mathcal{I}_1 + \mathcal{I}_1$. The condition $\ell \le r-1$ is always satisfied with valid parameters since $\ell < \sqrt{n} - 3$ and $r > \sqrt{n}$. Finally, the assumption $\ell \le \frac{1}{2}(\sqrt{n} - 3)$ leads us to $\ell \le \frac{k-r-1}{2}$ using constraints on valid parameters. This easily yields $\{ k+r-1, \dots, k+r+\ell-2 \} \subseteq \mathcal{I}_2 + \mathcal{I}_2$.
}
\hfill~\qed
\end{proof}


\begin{remark}
In practice, the assumption $\ell \le \frac{1}{2}(\sqrt{n} - 3)$ is not restrictive, since the decryption algorithm is effective only if $\ell \ll \log n$. 
\end{remark}

\correct{For valid parameters, we have $2k-1 \le n-3$, hence we can apply a Sidelnikov--Shestakov attack to the code $\Csub^{(\star 2)} \subseteq \FF_{q_0}^n$. This algorithm outputs}
a vector of locators $\alphaVecP \in \FqZ$ which is an affine transformation of the secret locators $\alphaVec$ (see Theorem~\ref{thm:sidelnikov-shestakov}). Formally, $\alphaVecP = a \alphaVec + b \ve{1}$ for some $a\in\FqZ\setminus\{0\}$ and $b\in\FqZ$, where $\ve{1} \mydef (1, \dots, 1) \in \FF_{q_0}^n$.

\subsubsection{Second step: from an affine to a linear transformation of the secret locators}

Lemma~\ref{lemma:equiCode} only ensures that $\TRS_{k,n}[\alphaVec, \tVec, \hVec, \etaVec] = \TRS_{k,n}[\hat{\alphaVec}, \tVec, \hVec, \hat{\etaVec}]$ if $\alphaVecH = a \alphaVec$ for a non-zero $a \in \F_{q_0}$. Therefore, given $\alphaVecP = a \alphaVec + b \ve{1}$, the search of a valid $b \in \FF_{q_0}$ such that $\alphaVecP - b \ve{1} =  a \alphaVec$ remains. Since $q_0$ is rather small, this search can be proceeded exhaustively as follows. Given $\alphaVecP$ and $b \in \F_{q_0}$, one first computes the code
\[
\mathcal{A}_b := \Span_{\FF_{q_0}}\big(\{ \evOp_{\alphaVecP - b \ve{1}}(X^i) : i \in \mathcal{I} \} \big).
\]
If $\mathcal{A}_b \subseteq \Cpub$ holds, then we have found a valid $b$ and hence a valid $\alphaVecH = \alphaVecP - b \ve{1}$. Notice that each individual test $\mathcal{A}_b \subseteq \Cpub$ can be performed in time $O(n^3)$.

\subsubsection{Third step: recovery of a valid pair $(\alphaVecH, \etaVecH)$}
\label{subsubsec:etaRecovery}

The previous steps provide a tuple $\alphaVecH \in \F_{q_0}^n$ which can be used as a vector of locators for the public TRS code. In order to determine a vector $\etaVecH \in \FF_q^n$ such that $\TRS_{k,n}[\alphaVec, \tVec, \hVec, \etaVec] = \TRS_{k,n}[\hat{\alphaVec}, \tVec, \hVec, \hat{\etaVec}]$, we use the following result.

\correct{
  \begin{lemma}
    \label{lemma:eta-bis}
    Let $1 \le \ell$, and $P(X) = \sum_{i=0}^{k-1} u_i X^i + \sum_{j=1}^\ell \eta_j u_{h_j}X^{k-1+t_j} \in \calP_{k,n}[\tVec, \hVec, \etaVec]$ such that $u_{h_j} \ne 0$. Denote by $\hat{p}_{h_j}$ and $\hat{p}_{k-1+t_j}$ the coefficients of the monomials $X^{h_j}$ and $X^{k-1+t_j}$ in $\hat{P}(X) = P(a^{-1}X)$. Then, we have 
    \[
    \hat{\eta}_j =  \eta_j a^{-(k-1+t_j-h_j)} = \frac{\hat{p}_{k-1+t_j}}{\hat{p_{h_j}}}\,.
    \]
    \end{lemma}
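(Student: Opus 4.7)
The proof is a direct computation: I would expand $\hat{P}(X) = P(a^{-1}X)$ monomial by monomial and read off the two coefficients of interest. The key observation, which makes the identification unambiguous, is that the hooks $h_j \in \{0,\dots,k-1\}$ and the twist exponents $k-1+t_j$ (with $t_j \ge 1$) lie in disjoint ranges, and moreover the $h_j$'s (resp.~the $t_j$'s) are pairwise distinct by the definition of $\calP_{k,n}[\tVec,\hVec,\etaVec]$.

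Concretely, the plan is as follows. First I would substitute $a^{-1}X$ for $X$ in
\[
P(X) = \sum_{i=0}^{k-1} u_i X^i + \sum_{j=1}^{\ell} \eta_j u_{h_j} X^{k-1+t_j}
\]
and regroup, obtaining
\[
\hat{P}(X) = \sum_{i=0}^{k-1} u_i a^{-i} X^i + \sum_{j=1}^{\ell} \eta_j u_{h_j} a^{-(k-1+t_j)} X^{k-1+t_j}.
\]
Since the two sums involve monomials of degree $< k$ and of degree $\ge k$ respectively, and since within each sum all exponents are distinct, the coefficient of $X^{h_j}$ in $\hat{P}(X)$ comes only from the first sum, namely
\[
\hat{p}_{h_j} = u_{h_j} a^{-h_j},
\]
while the coefficient of $X^{k-1+t_j}$ comes only from the twist term indexed by $j$ in the second sum,
\[
\hat{p}_{k-1+t_j} = \eta_j u_{h_j} a^{-(k-1+t_j)}.
\]

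Finally, I would form the ratio and use the hypothesis $u_{h_j} \ne 0$ to cancel $u_{h_j}$, yielding
\[
\frac{\hat{p}_{k-1+t_j}}{\hat{p}_{h_j}} = \eta_j a^{-(k-1+t_j - h_j)},
\]
which matches the formula for $\hat{\eta}_j$ given by Lemma~\ref{lemma:equiCode}. There is really no obstacle: the only thing to be careful about is justifying the uniqueness of the coefficient extraction, which follows immediately from the disjointness of the exponent ranges and the pairwise distinctness of hooks and twists.
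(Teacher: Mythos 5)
Your proposal is correct and follows essentially the same route as the paper's proof: substitute $a^{-1}X$ into $P$, read off the coefficients $\hat{p}_{h_j} = u_{h_j}a^{-h_j}$ and $\hat{p}_{k-1+t_j} = \eta_j u_{h_j}a^{-(k-1+t_j)}$, and take the ratio using $u_{h_j}\ne 0$. Your extra remark on the disjointness of the exponent ranges only makes explicit what the paper leaves implicit.
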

\begin{proof}
  This is clear from the following simple computation
  \[
  \hat{P}(X) \mydef P(a^{-1}X) =  \sum_{i=0}^{k-1} u_i a^{-i}X^i + \sum_{j=1}^\ell \eta_j u_{h_j} a^{-(k-1+t_j)} X^{k-1+t_j}\,.
  \] \hfill~\qed
\end{proof}
}

\correct{
  Hence, a vector of coefficients $\hat{\etaVec}$ such that $\TRS_{k,n}[\alphaVec, \tVec, \hVec, \etaVec] = \TRS_{k,n}[\hat{\alphaVec}, \tVec, \hVec, \hat{\etaVec}]$ can be computed as follows. Pick at random a codeword $\bfc = \ev{P}{\alphaVec} \in \Cpub = \TRS_{k,n}[\alphaVec, \tVec, \hVec, \etaVec]$. Then, interpolate $\bfc = \ev{\hat{P}}{\hat{\alphaVec}}$ as a polynomial evaluated over the vector of locators $\hat{\alphaVec}$. Notice that we have $\hat{P}(X) = P(a^{-1} X)$, thus for every non-zero coefficient $u_{h_j}$ of $P$, we obtain the coefficient $\hat{\eta}_j$ due to Lemma~\ref{lemma:eta-bis}.
}

\correct{
  It remains to be observed that, if a codeword $\bfc$ is picked uniformly at random in $\Cpub$,  the probability that  $u_j = 0$ is roughly $1/q$. Since $\ell \ll q$, a random $\bfc$ leads to the recovery of the whole vector $\hat{\etaVec}$ with high probability. Note that this procedure can be derandomized by iteratively taking each row the public matrix $\Gpub$.
}

\subsubsection{Final step: recovery of an alternative private key $(\hat{\S},\alphaVecH,\etaVecH)$}
After determining $\alphaVecH$ and $\etaVecH$, one can easily compute a matrix $\hat{\S}$ such that $\hat{\S}\GtrsH = \Gpub$. Then, $(\hat{\S},\alphaVecH,\etaVecH)$  can be  used in the proposed decryption algorithm as a valid (alternative) private key to retrieve any secret plaintext $\plain$.

\subsection{Analysis of the attack}

A summary of the attack is given in Algorithm~\ref{alg:recover}. Let us explain the notation we use there. Given a matrix $\A \in \FF_q^{k \times n}$, its transpose is represented by $\A^\top$, and  $\A^\bot$ is a matrix whose rows form a basis of the right kernel of $\A$. The reduced row echelon form of  $\A$ is denoted by $\rref(\A)$. Moreover, if $\A \in \Fq^{k\times n}$ and $\B \in \Fq^{k \times n}$ have the same rowspace, then $\D = \A \backslash \B$ denotes any solution to $\D\A=\B$. Finally, in Table~\ref{tab:notation-algo} we describe functions involved in Algorithm~\ref{alg:recover}.

\begin{algorithm}[t!]
  \caption{Key-recovery attack}
  \label{alg:recover}
  \begin{flushleft}
    \hspace*{\algorithmicindent} \textbf{Input:} $\Gpub$ \\
    \hspace*{\algorithmicindent} \textbf{Output:} $\hat{\S}, \alphaVecH$, $\etaVecH$ 
  \end{flushleft}
  
  \begin{algorithmic}[1]

    \Statex
    Step 1: recovery of some locators
    \State $\Gsub \gets \SubfieldSubcode(\Gpub) \in \FqZ^{(k-\ell)\times n}$ \label{line:Gsub}
    \State $\Gsq \gets \Square(\Gsub) \in \FqZ^{(2k-1)\times n}$\label{line:Gsq}
    \State $\alphaVecP \gets \SidelnikovShestakov(\Gsq) \in \FqZ^{n}$\label{line:alphap}

    \Statex
    \Statex
    Step 2: exhaustive search for $b$
    \ForAll{$b \in \FF_{q_0}$}\label{line:alphahStart}
    \State $\alphaVecH \gets ( \alpha^{\prime}_1-b,\hdots,\alpha^{\prime}_n-b ) \in \FqZ^{n}$
    \State $\Grrs \gets \rs(\alphaVecH) \in \FqZ^{(k-\ell)\times n}$
    \If{$\Grrs (\Gsub^{\bot})^{\top} = \0$}
    \State {\bf break}
    \EndIf
    \EndFor\label{line:alphahStop}
    
    \Statex
    \Statex
    Step 3: recovery of $\hat{\etaVec}$
    \State $J \gets \{1, \dots, \ell\}$
    \ForAll{row $\ve{r}_i$ of $\Gpub$}\label{line:etaStart}
    \State $P(X) \gets \Interpolate(\alphaVecP, \ve{r}_i) \in \Fq^{n}$
    \ForAll{$j \in J$}
    \If{$ p_{h_j} \ne 0$}
    \State  $\hat{\eta}_j \gets \frac{p_{k-1+t_j}}{ p_{h_j}} \in \Fq$
    \State $J \gets J \setminus \{ j \}$
    \EndIf    
    \EndFor
    \If{$J = \varnothing$}
    \State {\bf break}
    \EndIf 
    \EndFor\label{line:etaStop}

    \Statex
    \Statex
    Step 4: recovery of $\hat{\S}$
    \State $\Ghattrs \gets \opGTRS(\alphaVecH,\etaVecH) \in \Fq^{k \times n}$\label{line:GTRS}
    \State $\hat{\S} \gets \Ghattrs \backslash \Gpub \in \Fq^{k \times k}$\label{line:Shat}
    \State \Return $\hat{\S}$, $\alphaVecH$, $\etaVecH$
  \end{algorithmic}
\end{algorithm}

\begin{table}[t]
\correct{
\begin{tabular}{p{0.22\textwidth} p{0.67\textwidth}}
  \hline
  Function & Description \\
  \hline
  $\SubfieldSubcode$ & maps a generator matrix of $\Cpub$ to a generator matrix of the subfield subcode of $\Cpub$ \\
  $\Square$ & maps a generator matrix of $\Csub$ to a generator matrix of the code $\Csub^{(\star 2)}$\\
  $\Interpolate$ & maps vectors $(\a,\b) \in (\FF_q^n)^2$ to $P(X)$ of degree $<n$ such that $P(a_i) = b_i$ for $i=1,\hdots,n$\\
  $\rs$ & maps a vector $\a = (a_1, \dots, a_n) \in \FF_{q_0}^n$ to a matrix $\A \in \FF_{q_0}^{(k-\ell) \times n}$ whose rows are $(a_1^j, \dots, a_n^j)$ for each $j \in \mathcal{I} =  \{0, \dots, k-1\} \setminus \{h_1, \dots, h_\ell \}$.\\
  $\SidelnikovShestakov$ & implements a Sidelnikov--Shestakov attack, which takes a generator matrix $\G$ of a RS code as input, and returns a vector of locators $\alphaVec'$ describing the code \\
  $\opGTRS$ & maps the vectors $\alphaVecH$ and $\etaVecH$ to the generator matrix $\GtrsH$ of the corresponding TRS code \\
  \hline
\end{tabular}
\caption{\label{tab:notation-algo}List of functions used in Algorithm~\ref{alg:recover}.}
}
\end{table}

\correct{
\begin{theorem}
  \label{thm:main-thm}
  Given any generator matrix $\Gpub$ of a TRS code $\Cpub = \TRS_{k,n}[\alphaVec, \tVec, \hVec, \etaVec] \subseteq \FF_q^n$, Algorithm~\ref{alg:recover} retrieves a tuple $(\hat{\S}, \hat{\alphaVec}, \hat{\etaVec})$ such that the matrix  $\hat{\S} \GtrsH$ generates $\Cpub$ in $O(\max\{q_0, 2^\ell, n\} \cdot n^3)$ operations over $\FF_q$.
\end{theorem}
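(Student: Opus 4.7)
The plan is to verify correctness and complexity of each of the four stages of Algorithm~\ref{alg:recover} in turn.

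For Step~1, I will first apply Proposition~\ref{prop:subcode} to identify the subfield subcode $\Csub = \Cpub \cap \FqZ^n$ with $\Span_{\FqZ}(\{\ev{X^i}{\alphaVec} : i \in \Ifree\})$. Proposition~\ref{prop:squareCode} then identifies $\Csub^{(\star 2)}$ with the classical Reed--Solomon code $\RS_{2k-1,n}[\alphaVec] \subseteq \FqZ^n$. Since valid parameters enforce $2k-1 \le n-3$, Theorem~\ref{thm:sidelnikov-shestakov} applies and $\SidelnikovShestakov(\Gsq)$ outputs, in $O(n^4)$ operations over $\FqZ$, a vector $\alphaVecP = a\alphaVec + b\ve{1}$ with $a \in \FqZ\setminus\{0\}$ and $b \in \FqZ$.

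For Step~2, my key claim is that the ``true'' shift $b_0$ (satisfying $\alphaVecP - b_0 \ve{1} = a\alphaVec$) always satisfies the test $\Grrs (\Gsub^\bot)^\top = \0$. Indeed, when $\alphaVecH = a\alphaVec$, Lemma~\ref{lemma:equiCode} supplies an $\hat{\etaVec}$ with $\Cpub = \TRS_{k,n}[\alphaVecH, \tVec, \hVec, \hat{\etaVec}]$, and reapplying Proposition~\ref{prop:subcode} to this equivalent description gives $\Csub = \Span_{\FqZ}(\{\ev{X^i}{\alphaVecH} : i \in \Ifree\})$, which is exactly the rowspace of $\Grrs$. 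Hence the loop exits with a valid $\alphaVecH$ after at most $q_0$ iterations of $O(n^3)$ cost each, for a total of $O(q_0 n^3)$.

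For Step~3, Lemma~\ref{lemma:eta-bis} guarantees that from any codeword $\ve{r}_i = \ev{P}{\alphaVec} \in \Cpub$ whose polynomial $P$ has $p_{h_j} \ne 0$, interpolation over $\alphaVecH$ together with a single division recovers $\hat{\eta}_j$. I will argue that sweeping through the $k$ rows of $\Gpub$ is enough to cover all $j$: the coefficient map $P \mapsto p_{h_j}$ is a non-trivial $\FF_q$-linear form on $\calP_{k,n}[\tVec,\hVec,\etaVec]$ (it sends $X^{h_j}$ to $1$), so its kernel is a proper $\FF_q$-subspace; since the rows of $\Gpub$ span $\Cpub$, not all of them can lie inside it. With Lagrange interpolation costing $O(n^2)$ per row, Step~3 fits in $O(n^3)$, and Step~4 is a single linear system solve in $O(n^3)$.

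Adding the $O(2^\ell n^3)$ cost of $\SubfieldSubcode$ (realized by an $\FqZ$-expansion of $\Gpub$ of $\FqZ$-row-dimension $2^\ell k$ followed by a kernel computation) and using $n \le \max\{q_0, 2^\ell, n\}$ to absorb the $O(n^4)$ terms from Step~1 then yields the announced bound $O(\max\{q_0, 2^\ell, n\} \cdot n^3)$. I expect the main subtlety to be the precise handling of Step~2: beyond checking that the correct $b_0$ passes the test, I will need to rule out the possibility that the early \textbf{break} returns a spurious $\alphaVecH$ incompatible with subsequent steps; a dimension count combined with the observation that $\alphaVecP - b\ve{1}$ has pairwise distinct entries for every $b$ should force $\rs(\alphaVecH) = \Csub$ and, via the uniqueness-of-locators part of Theorem~\ref{thm:sidelnikov-shestakov}, ensure that any passing $b$ produces a valid alternative key.
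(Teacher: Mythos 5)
Your proposal is correct and follows essentially the same route as the paper: the paper's proof of Theorem~\ref{thm:main-thm} simply refers back to Section~\ref{sec:key-recovery} for correctness (via Propositions~\ref{prop:subcode} and~\ref{prop:squareCode}, Theorem~\ref{thm:sidelnikov-shestakov}, and Lemmas~\ref{lemma:equiCode} and~\ref{lemma:eta-bis}) and then itemizes the per-line costs exactly as you do, arriving at the same $O(\max\{q_0,2^\ell,n\}\cdot n^3)$ bound. The one place where you go beyond the paper is the spurious-$b$ issue in Step~2, which the paper silently assumes away ("if $\mathcal{A}_b\subseteq\Cpub$ holds, then we have found a valid $b$"); your instinct to address it is right, but the uniqueness clause of Theorem~\ref{thm:sidelnikov-shestakov} concerns RS codes and does not directly apply to the monomial span $\rs(\alphaVecH)$, so closing that gap actually requires showing that $\Span_{\FqZ}\big(\{\ev{(X+c)^i}{a\alphaVec} : i\in\Ifree\}\big)\ne\Csub$ for $c\ne 0$, e.g.\ by exhibiting some $i\in\Ifree$ for which $(X+c)^i$ has a nonzero coefficient on a hook monomial $X^{h_j}$ (such as $i=r+\ell$ against $h_\ell$, which works whenever the characteristic does not divide $r+\ell$).
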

}
\begin{proof}
  The correctness of Algorithm~\ref{alg:recover} was proved in Section~\ref{sec:key-recovery}. Let us now provide details about the complexity of Algorithm~\ref{alg:recover}.
\begin{itemize}
\item Line~\ref{line:Gsub}: The computation of $\Gsub \in \FqZ^{(k-\ell)\times n}$ requires $O(n^2(k+n))\subseteq O(n^3)$ operations in $\Fq$ and $O(n^2(2^{\ell}(n-k)+n))\subseteq O(2^\ell n^3)$ operations in $\FqZ$.
\item Line~\ref{line:Gsq}: The computation of $\Gsq \in \FqZ^{(2k-1)\times n}$ can be performed in time $O(n^4)$. Informally, one needs to find a basis of the space generated by the set $\{ \ve{g}_{i,j} := ({\Gsub})_i \star ({\Gsub})_j, 1 \le i, j \le \dim \Csub \}$. This basis can be built iteratively; updating the basis with a new element  costs $O(n^3)$ operations in $\F_{q_0}$ and must be done $O(n)$ times, and rejecting candidates costs $O(n^2)$ operations in $\F_{q_0}$ and must be done $O(n^2)$ times.
\item Line~\ref{line:alphap}: Applying the $\SidelnikovShestakov$ function on $\Gsq \in\FqZ^{(2k-1)\times n}$ requires $O( (2k-2)^4 + (2k-2)n) \subseteq O(n^4)$ operations in $\FqZ$~\cite{sidelnikov1992attack}.
\item Line~\ref{line:alphahStart} to Line~\ref{line:alphahStop}:  The computation of $\alphaVecH \in \FqZ^{n}$ requires $O(n)$ operations in $\FqZ$; building $\Grrs \in \FqZ^{(k-\ell)\times n}$ needs $O((k-\ell)n)$ operations in $\FqZ$; matrix multiplication $\Grrs (\Gsub^{\bot})^{\top}$ needs $O((k-\ell)(n-k+\ell)n) \subseteq O(n^3)$ operations in $\FqZ$ ($\Gsub^{\bot}$ was already computed in Line~\ref{line:Gsub}). In the worst case, the previous sequences of computations have to be performed $q_0$ times. Hence these steps require $O(q_{0}n^3)$ operations in $\FqZ$.
\item Line~\ref{line:etaStart} to Line~\ref{line:etaStop}: In the worst case, $\ell \cdot k$ interpolations have to be performed, requiring $O(\ell k n^2) \subseteq O(\ell n^3)$ operations in $\Fq$.
 \item Line~\ref{line:GTRS}: Computation of $\Ghattrs\in \Fq^{k\times n}$ needs $O(kn)\subseteq O(n^2)$ operations in $\Fq$.
 \item Line~\ref{line:Shat}: Computation of $\hat{\S} \in \Fq^{k \times k}$ by a reduction to row echelon form  of the matrix $\begin{mymatrix} \Ghattrs^{\top}& \Gpub^{\top}  \end{mymatrix} \in \Fq^{n\times 2k}$ needs $O(n^2(2k)) \subseteq O(n^3)$ operations in $\Fq$. \hfill~\qed
\end{itemize}
\end{proof}

In practice, $\ell$ and $q_0 = q^{1/2^{\ell}}$ must be chosen to be small (for instance, $\ell =1$ and $q_{0} = n+1 = 2^8$ were proposed in~\cite{beelen2018structural}) in order to obtain an efficient decryption algorithm and keys of moderate size. Hence, for practical parameters Algorithm~\ref{alg:recover} has a complexity in $O(n^4)$ and thus recovers a valid private key in polynomial time.

Our attack is implemented in the computer algebra system SageMath v8.7~\cite{sagemath} and is available at \url{https://bitbucket.org/julianrenner/trs_attack}. Although the implementation is not optimized, it recovers a valid private key within a few minutes for the proposed parameters, see Table~\ref{tab:para}.

\setlength{\tabcolsep}{11pt}
\begin{table*}[t!]
  \renewcommand{\arraystretch}{1.1} 
	\begin{center}
		\begin{tabular}{l|c|c|c|c|c||c}
                   $q_0$ & $n$ & $k$ & $\ell$ & $\wt(\e)$ & \makecell{Claimed \\security level} & \makecell{Runtime of\\ Algorithm~\ref{alg:recover}}   \\
                  \hline \hline
                  $2^8$ & $255$ & $117$ & $1$ & $83$ & $128$ bits\phantom{$^{*}$}  & 133 seconds \\
                  $2^8$ & $255$ & $117$ & $2$ & $83$ & $128$ bits\phantom{$^{*}$}  & $141$ seconds \\
                  $2^9$ & $511$ & $200$ & $3$ & $192$ & $196$ bits\phantom{$^{*}$} &  $2260$ seconds \\
                  $2^9$ & $511$ & $170$ & $3$ & $217$ & $256$ bits\phantom{$^{*}$}  & $1532$ seconds 
		\end{tabular}
              \end{center}
        \caption{Experimental results obtained by averaging several runtimes of Algorithm~\ref{alg:recover} on an Intel(R) Core(TM) i7-7600U CPU @ 2.80GHz. The first row refers to parameters proposed by the designers of the system. Remaining security levels were computed according to formulae given in~\cite{beelen2018structural}.}
	\label{tab:para}
\end{table*}

\section{Discussion and open questions}
\label{sec:discussion}

\subsection{Repairing the cryptosystem?}

After a notification of this attack, the authors of~\cite{beelen2018structural} described a possible fix of the system, in which a modified version of the generator matrix is made public. The idea is to multiply the generator matrix $\Gpub$ from the right by a diagonal matrix with non-zero entries $\ve{y} = (y_1, \dots, y_n) \in (\F_q \setminus \{ 0 \})^n$, such that the $\F_{q_0}$-subfield subcode of the vector space spanned by the rows of $\Gpub$ is not contained in a RS code. This clearly prevents a direct application of our attack.

Nevertheless, we would like to point out that this possible repair might not fix the inherent weaknesses of the cryptosystem. In fact, the subfield subcode of a GRS code $\ve{y} \star \RS_{k,n}[\alphaVec]$ is a so-called \emph{alternant code} ${\rm Alt}_{k',n}[\alphaVec, \ve{y}] \subseteq \F_{q_0}^n$, which also admits an algebraic description. As a consequence, it seems very plausible that the security of the proposed repaired cryptosystem can be reduced to the security of a McEliece-like cryptosystem using the subfield subcode ${\rm Alt}_{k',n}[\alphaVec, \ve{y}]$.

One can then notice that the parameters proposed by the authors of~\cite{beelen2018structural} are far below those considered as secure for alternant codes. For instance, BIG QUAKE~\cite{BigQuake} and Classic McEliece~\cite{ClassicMcEliece} (both are unbroken candidates for the NIST standardization call on post-quantum cryptography) use alternant codes with a length and dimension of several thousands, while in the proposed parameters for the TRS codes, we have $n = 255$ and $k = 117$ with a field size $q_0 = 2^8$. Algebraic attacks as developed in~\cite{FaugereOPT10, FaugereOPPT16} should then be considered as a potential threat. One should also mention the recent attack on the  cryptosystem DAGS~\cite{DAGS} based on alternant codes, performed by Barelli and Couvreur~\cite{BarelliC18}. Informally, the authors of~\cite{BarelliC18} manage to derive an alternant code with much smaller parameters from the public code, allowing the last step of the key recovery algorithm --- which is exponential in the involved parameters --- to remain feasible due to the small size of the derived alternant code.

Finally and most crucially, one can question the possible benefit to consider codes whose security might not be better than those based on alternant codes (for which cryptosystems have been designed and studied), but which suffer from larger key sizes and much less efficient decoding algorithms.

\subsection{On the rank metric version of the cryptosystem}

In~\cite{puchinger2018twisted} a modified version of the previous system was proposed, based on a subfamily of twisted Gabidulin codes. The idea is to consider a variant of the GPT cryptosystem~\cite{gabidulin1991ideals}, where twisted Gabidulin codes are used instead of (subcodes of) Gabidulin codes. Although we do not claim to have a proper attack on the system, let us show some potential weaknesses that could be analyzed in a future work.

The GPT cryptosystem can be viewed as an analogue of the McEliece cryptosystem, using rank metric codes instead of Hamming metric codes. We refer to~\cite{OverbeckPhD2007} for more details about rank metric codes and variants of the GPT cryptosystem. Let us give a short overview of the latter.

Let $\Fp \subset \Fq$ and  $\Gamma \subseteq \{ \mathcal{C} \subseteq \Fq^{n-t}, \dim \mathcal{C} = k \}$ be a family of rank metric codes. The GPT cryptosystem works as follows:
  \begin{itemize}
  \item \emph{Key generation:} Alice generates a secret generator matrix $\ve{G} \in \Fq^{n-t}$ for a code $\mathcal{C}$ randomly chosen in $\Gamma$. Then she computes a public key $\Gpub = \S [\X | \ve{G}] \P$, where the matrices $\S \in \GL_k(\Fq)$, $\X \in \Fq^{k \times t}$ of rank $s \le t$,  and $\P \in \GL_n(\Fp)$ are chosen uniformly at random and kept secret.
  \item \emph{Encryption:} given a plaintext $\plain \in \Fq^k$, Bob computes the ciphertext $\cipher = \plain \Gpub + \e$, where $\e \in \Fq^n$ is a random error with small rank over $\Fp$ (the rank of the error is such that $\e$ can be decoded in $\mathcal{C}$).
    \item \emph{Decryption:} Alice decodes the last $n-t$ coordinates of  $\cipher \P^{-1}$ in the code $\mathcal{C}$ and retrieves $\plain$.
  \end{itemize}
  In most variants of the GPT cryptosystem, $\Gamma$ is a (sub-)family of Gabidulin codes~\cite{Gabidulin_TheoryOfCodes_1985}
  \[
  \Gab{\alphaVec}{k, n-t} = \Span_{\FF_q} \big\{ \ev{X^{[i]}}{\alphaVec},\; i=0,\dots,k-1 \big\},
  \]
  where $\alphaVec \in \FF_q^{n-t}$ are $\FF_p$-linearly independent, and $X^{[i]} := X^{p^i}$. Polynomials with monomials only of the form $X^{[i]}$ are called $p$-polynomials, or linearized polynomials. In~\cite{puchinger2018twisted}, the authors proposed to define $\Gamma$ as the subfamily of twisted Gabidulin codes
  \[
  \GmultS = \Big\{ \ev{f}{\alphaVec} \, : \, f \in \Big\{ \sum_{i=0}^{k-1} f_i X^{[i]} + \sum_{j=1}^{\numTwists} \eta_j f_{h_j} X^{[k-1+t_j]} : f_i \in \Fq \Big\} \Big\},
  \]
  where  $\eta_i$ are chosen in the chain of subfields $\Fqsi{0} \subset \Fqsi{1} \subset \hdots \subset \Fqsi{\ell} = \Fq$, and $(\alpha_1, \dots, \alpha_{n-t}) \in \Fqsi{0}^{n-t}$ are $\Fp$-linearly independent, similar to the case of TRS codes.

Our claim is that the code $\mathcal{C}_{\rm pub}$ generated by $\Gpub$ also admits structured subfield subcodes which could be used to attack the system. Indeed, one can prove that the last $n-t$ coordinates of $(\mathcal{C}_{\rm pub} \cap \Fqsi{0}^n)\P^{-1}$ form a subcode of the Gabidulin code $\Gab{\alphaVec}{k,n-t} \subseteq \Fqsi{0}^{n-t}$ of rather small codimension. Applying variants of Overbeck's attacks --- e.g. in~\cite{Overbeck-StructuralAttackGPT} --- might lead to the recovery of a linear transformation of $\alphaVec$ and thus a structural attack on the public key close to the one presented in this paper.

For a code $\mathcal{A} \subseteq \FF_q^n$ and $f \ge 0$, let $\mathcal{A}^{[1]} \mydef \{ (a_1^{[f]}, \dots, a_n^{[f]}), \a \in \mathcal{A} \}$, and 
\begin{equation*}
  \Lambda_{f} (\mathcal{A}) \mydef  \mathcal{A} + \mathcal{A}^{[1]} + \dots + \mathcal{A}^{[f]}\,.
\end{equation*}
In fact, we observe in simulations that for  $f = n-k-t-1$, if $\Lambda_{f}(\mathcal{C}_{\rm pub} \cap \Fqsi{0}^n)$ has dimension $n-1$, then one recovers an $\F_p$-linear transformation $\alphaVecH$ of $\alphaVec$, as well as a full-rank matrix $\Phat \in \F_p^{n \times n}$, by applying~\cite[Algorithm 3.5.1]{OverbeckPhD2007} to a generator matrix of $\mathcal{C}_{\rm pub} \cap \Fqsi{0}^n$. Then, the coefficients $\etaVecH$ are determined by interpolation of the last $n-t$ positions of the rows of $\Gpub \Phat^{-1}$ with $p$-polynomials of $p$-degree smaller than $n$, similar to Section~\ref{subsubsec:etaRecovery}. Finally, one chooses $\Shat$ such that
\begin{equation*}
\Shat \Ghat = \big(\Gpub \Phat^{-1}\big)_{[t+1:n]},
\end{equation*}
where subscript $[t+1:n]$ refers to the last $n-t$ positions of $\Gpub \Phat^{-1}$ and $\Ghat$ is a generator matrix of $\GmultHatS$. Clearly, ($\Shat,\alphaVecH, \etaVecH, \Phat) $ is then a valid private key.

Further simulations show that if $\X$ has full $\Fq$-rank and $t$ is small, then the code $\Lambda_{f}(\mathcal{C}_{\rm pub} \cap \Fqsi{0}^n)$ has a dimension $n-1$ with high probability. However, if  $t$ is large or if $\X$ has $\Fq$-rank smaller than $t$, then $\Lambda_{f}(\mathcal{C}_{\rm pub} \cap \Fqsi{0}^n)$ has dimension smaller than $n-2$ and this straightforward approach fails.

Since a precise analysis of the potential weakness of system proposed in~\cite{puchinger2018twisted} is out of the scope of this paper, we leave it as an open problem for future research.

\section{Conclusion}\label{sec:conclusion}

This paper presents an efficient key-recovery attack on the McEliece cryptosystem based on a subfamily of TRS codes. The attack does not contradict the structural properties presented in~\cite{beelen2018structural}, but recovers the structure of a \emph{subfield subcode} of the public TRS code, which enables us to determine a description of the supercode. This attack retrieves a valid private key from the public key for all practical parameters in $O(n^4)$ field operations. This is formally proved, and confirmed by experimental results: one retrieves a valid private key for a claimed security level of $128$ bits within a few minutes. 
In addition, the security of the system after an attempt to repair it is discussed, as well as potential ways to adapt our attack to the rank metric variant of the considered system.

 \correct{The subfield subcode approach presented in this paper is unique, in the sense that a widespread idea considers the restriction of codes to subfields as a way to break their structure. However, our cryptanalysis proves that subfield subcodes --- as well as punctured codes and shortened codes --- must also be taken into account when trying to assert the security of McEliece-like cryptosystems.}


\section*{Acknowledgements}
This work was done while the second author was visiting the Institut de Recherche Mathématique de Rennes (IRMAR), Université de Rennes 1, France.

The first author is funded by the French \emph{Direction Générale l'Armement}, through the \emph{Pôle d'excellence cyber}.

This project has received funding from the European Research Council (ERC) under the European Union’s Horizon 2020 research and innovation programme (grant agreement No~801434).

We would like to thank Antonia Wachter-Zeh (TUM) for fruitful discussions and Oliver De Candido (TUM) for his comments that helped to improve the manuscript. We would further like to thank the authors of the proposed cryptosystem~\cite{beelen2018structural} for validating our attack and pointing out a possible repair of the system with respect to our attack.

\bibliographystyle{alpha}
\bibliography{main}

\end{document}